\numberwithin{equation}{section}
\numberwithin{figure}{section}
\numberwithin{table}{section}
\newtheorem{definition}{Definition}[section]
\newtheorem{remark}[definition]{Remark}
\newtheorem{approximation}{Approximation}[section]
\newtheorem{theorem}{Theorem}[section]
\newcommand{\PreserveBackslash}[1]{\let\temp=\\#1\let\\=\temp}
\newcolumntype{C}[1]{>{\PreserveBackslash\centering}p{#1}}
\newcolumntype{R}[1]{>{\PreserveBackslash\raggedleft}p{#1}}
\newcolumntype{L}[1]{>{\PreserveBackslash\raggedright}p{#1}}
\providecommand{\keywords}[1]
{
  \small	 
  \textbf{\textit{Keywords---}} #1
}
\title{Multi-asset market making under the quadratic rough Heston \thanks{
  This work benefits from the financial support of the Chaires Machine Learning \& Systematic Methods. The authors would like to thank Marouane Anane and Alexandre Davroux for very useful comments.
}}
\author{Mathieu Rosenbaum $^1$ \\ \vspace{-1em} mathieu.rosenbaum@polytechnique.edu
\and
Jianfei Zhang $^{1,2}$ \\ \vspace{-1em} jianfei.zhang@polytechnique.edu}
\date{%
$^1$ \footnotesize École polytechnique, CMAP, Institut Polytechnique de Paris, 91120 Palaiseau, France \\%
$^2$ \footnotesize Exoduspoint Capital Management, 32 Boulevard Haussmann, 75009 Paris, France \\[2ex]%
\today}
\begin{document}
\maketitle

\begin{abstract}
  Given the promising results on joint modeling of SPX/VIX smiles of the recently introduced quadratic rough Heston model, 
  we consider a multi-asset market making problem on SPX and its derivatives, \textit{e.g.} VIX futures, SPX and VIX options. 
  The market maker tries to maximize its profit from spread capturing while controlling the portfolio's inventory risk, which can be fully explained 
  by the value change of SPX under the particular setting of the quadratic rough Heston model. The high dimensionality of the 
  resulting optimization problem is relaxed by several approximations. An asymptotic closed-form solution can be obtained.
  The accuracy and relevance of the approximations are illustrated through numerical experiments.
\end{abstract}
\keywords{Multi-asset market making, quadratic rough Heston, quadratic approximation}

\section{Introduction}
\label{sec:intro}
It is well known that the constant volatility assumption in the celebrated Black-Scholes model is not consistent with empirical observations of 
financial time series and uneven implied volatility surfaces. 
Tackling this issue by regarding the volatility as a continuous-time random process, stochastic volatility models are able to reproduce several 
stylized facts of historical data, 
such as the fat-tailed distribution of returns and the volatility clustering phenomenon. However, the shapes of the implied volatility surfaces 
generated by conventional stochastic volatility models, such as the Hull and White, Heston, and SABR models, usually differ substantially from those of empirical observations, see for instance \cite{bayer2016pricing}. 
The rough volatility paradigm brings new solutions, being able to achieve superior fits of implied volatility surfaces than the former models \cite{bayer2016pricing,el2019roughening, gatheral2018volatility}. 
The recently introduced quadratic rough Heston (QRH) model shows its potential in calibrating jointly SPX and VIX smiles \cite{gatheral2020quadratic}. It models the price of an asset $S$ (here the SPX)
and its spot variance $V$ under risk-neutral measure as
\begin{equation*}
  dS_t = S_t\sqrt{V_t}dW_t, \quad V_t=a(Z_t - b)^2 + c,  
\end{equation*}
where $W$ is a Brownian motion, $a, b, c$ are all positive constants and $Z_t$ is defined as
\begin{equation}
  Z_t = \int_0^t \lambda\frac{(t-s)^{\alpha - 1}}{\Gamma(\alpha)}\big(\theta_0(s) - Z_s\big)ds + 
  \int_0^t\eta\frac{(t-s)^{\alpha-1}}{\Gamma(\alpha)}\sqrt{V_s}dW_s  \, ,
\label{eq:Z_t}
\end{equation} 
where $\alpha\in(1/2, 1), \lambda>0$, $\eta>0$, and $\theta_0(\cdot)$ is a deterministic function. The fractional kernel $K(t) = \frac{t^{\alpha-1}}{\Gamma(\alpha)}$ enables us to generate rough volatility dynamics. It is also used in the rough Heston model, under which the volatility trajectories have almost surely H$\ddot{\text{o}}$lder regularity $\alpha - 1/2 -\varepsilon$, for any $\varepsilon > 0$ \cite{el2019characteristic}. This actually recalls the observation in \cite{gatheral2018volatility} that the dynamic of log-volatility is similar to that of a fractional Brownian motion with Hurst parameter of order 0.1. Importantly, the QRH model gives a natural way to encode the 
\textit{strong Zumbach effect} \cite{dandapani2021quadratic}, which means that the conditional law of future volatility depends on the past not
only through past volatility trajectories but also through past returns.

\vskip 0.15in
\noindent
Note that in this model, only one Brownian motion is involved to generate the stochastic nature of the model and the dynamic of the volatility 
can be fully explained by past returns. While this is quite different from conventional settings, where volatility is exposed to 
additional random factors, satisfactory results on market data are reported in \cite{gatheral2020quadratic,rosenbaum2021deep}.
It is also emphasized in \cite{guyon2022volatility} that price returns largely explain volatility. An extensive empirical study on the formation
process of realized volatility is presented in \cite{rosenbaum2022universality}, where the predictive power of price returns on future realized volatility 
is underlined. 
The QRH model gives opportunities to model consistently SPX derivatives, including VIX futures, SPX and VIX options. 
This motivates us to consider a multi-asset market making problem on SPX\footnote{SPX, or formally S\&P 500 index, is not tradable. When no ambiguity occurs, in this paper it means the liquid index-tracking products such as index funds, ETF, and futures, 
which are highly correlated with each other.} and the aforementioned derivatives of SPX. 
Today these assets are actively traded and most of the volume comes from electronic trading platforms based on limit order books, 
which justifies the relevance of developing automated market making algorithms.

\vskip 0.15in
\noindent
The academic literature on optimal market making dates back to the 80s. In \cite{ho1981optimal}, the challenge faced by a market maker, 
which consists in maximizing its profit from spread capturing with a particular focus on execution uncertainty and inventory risk, is formulated as a dynamic 
programming problem. Approximate solutions are given for specific functional forms of the order arrival rates. The approach is revived in the seminal 
work \cite{avellaneda2008high} with tools of stochastic optimal control. Since then, an extensive literature on optimal market
making has been developed. It is shown in \cite{gueant2013dealing} that under inventory constraints, the Hamilton-Jacobi-Bellman (HJB) equations
associated with the control problem of \cite{avellaneda2008high} can be reduced to a linear system of ordinary differential equations (ODEs), for which
the asymptotic of the closed-form solution are given when the considered time horizon tends to infinity. Various features, such as model misspecification, 
short-term alpha signals, and microstructural characteristics, 
are investigated in \cite{cartea2017algorithmic,cartea2014buy,cartea2018algorithmic}. In these papers, a mean-variance type objective function is considered instead of the Von Neumann-Morgenstern expected utility as in \cite{avellaneda2008high,gueant2013dealing}. In \cite{gueant2017optimal}, 
the results in \cite{gueant2013dealing} are extended to more general order execution rates and the two aforementioned objective functions are 
reconciled. It is shown that the HJB equations associated with the two classes of objective functions can be sorted out into the same 
family of ODEs. 

\vskip 0.15in
\noindent
Multi-asset market making problems are usually exposed to the curse of dimensionality. The conventional grid methods typically used in the case of single-asset become 
inadmissible given the huge amount of computational resources required. In \cite{bergault2021size}, a method based on factor decomposition is developed to reduce the dimension
of the problem. Deep neural networks are considered in \cite{gueant2019deep} to approximate the value function and the optimal quoting strategy. Closed-form approximations
are obtained in \cite{bergault2021closed} replacing the Hamiltonian functions with quadratic ones. A similar idea is followed in \cite{baldacci2020approximate} in the context of 
option market making. We refer to \cite{cartea2015algorithmic, gueant2016financial} for a panorama on algorithmic market making. 

\vskip 0.15in
\noindent
All the models proposed in the above papers consider continuous control variables, the quotes' distances with respect to the reference price, 
which makes them more adapted to quote-driven markets such as corporate bond markets and FX markets. The case of the markets ruled by limit order books is addressed in \cite{guilbaud2013optimal,guilbaud2015optimal}, where the market maker decides whether to submit limit orders at the best limits or to get immediate execution using 
market orders. More dedicated microstructure models reflecting some 
important empirical observations at the microstructural scale are chosen in \cite{fodra2015high,lu2018order}.

\vskip 0.15in
\noindent
In this paper, we consider an agent continuously quoting for SPX and its derivatives including VIX futures, SPX and VIX options, 
via limit orders on both the bid and ask sides. The market maker tries to maximize its expected gain from spread capturing while controlling the inventory risk.
In the option market making problems considered in \cite{baldacci2021algorithmic, baldacci2020approximate}, 
the underlying follows a one-factor stochastic volatility model and 
perfect Delta hedging is assumed for European options. The market maker then penalizes the option portfolio's total Vega. In the QRH model, the price dynamics of all the aforementioned derivatives can be explained by the value variation of SPX. Thus, in the context of market making, the principal
inventory risk is summarized by the sensitivities of their price change with respect to that of SPX.  Even if theoretically this risk can be hedged out
perfectly using only SPX as shown in \cite{rosenbaum2021deep}, continuous hedging is not realistic and the cost of frequent 
hedging is far from negligible. More particularly, as the typical inventory process of market making shows a mean reverting behavior around zero,
frequent hedging results in unnecessary buys and sells of SPX. In our approach, instead of hedging out completely the net inventory risk of the held derivatives 
with SPX, the agent also market makes on SPX with limit orders and some net inventory risk is authorized.

\vskip 0.15in
\noindent
The market making basket selected here is essentially made of large tick assets, for which the effective spread is almost always equal to one tick. Thus the quoting strategy of the market maker considered in this paper consists in whether to send limit orders on the two best limits given its 
portfolio inventory, \textit{i.e.} whether to make a market on the best limits, instead of choosing the optimal quoting price as most of the models inspired from that of \cite{avellaneda2008high}.
The resulting modeling framework is in the spirit of \cite{guilbaud2013optimal}, while the dimension of the problem is much higher. 
The high dimensionality comes from two aspects. 
First, we use here the multi-factor approximation of the QRH model, which is Markovian and suitable for the computation of hedging quantities \cite{abi2019multifactor, rosenbaum2021deep}. 
However, this brings additional model states. Second, the portfolio of the market maker should be scaled to dozens of assets, including futures and options of various specifications.

\vskip 0.15in
\noindent
To relax the first one, we assume the volatility of SPX and the price sensitivities of other assets against SPX to be constant 
during the time horizon of market making. At first sight, it seems to deviate from the important feature of the QRH model that volatility spikes can happen 
in the case of significant price trends. However, this assumption is relevant given the short time horizon of market making problems, which is 
usually less than several hours. In fact, we always follow the approximated QRH model for the computation of related quantities when switching from one market making period to another, \textit{e.g.} at the beginning of each day, and we only take them as constant during the following period. As for the multi-asset nature of the problem,
we apply the idea introduced in \cite{bergault2021closed} of approximating the HJB equation with another one whose asymptotic closed-form solution can be deduced.
Having closed-form approximate solutions makes the recalibration of the algorithm very efficient, and 
thus it is even possible to design an ``online'' market making algorithm, \textit{i.e.} the algorithm is fed regularly with the latest parameter values inside each market making period.

\vskip 0.15in
\noindent
The paper is organized as follows. In Section \ref{sec:prob_desc}, we recall first the multi-factor approximation of the QRH model, and then 
describe the considered multi-asset market making problem. Several approximations are introduced in Section \ref{sec:prob_approx} to reduce the 
dimension of the problem. The asymptotic value function associated with the approximated problem is given in closed-form. Finally, in Section \ref{sec:num_res},
we evaluate the relevance of these approximations through numerical experiments.

\section{Description of the problem}
\label{sec:prob_desc}
\subsection{Multi-factor approximation of the QRH model}
\label{sec:qrh_model}
Inspired by \cite{abi2019multifactor}, in \cite{rosenbaum2021deep} a multi-factor approximation of the QRH model is introduced to make the model Markovian, and an efficient calibration procedure based on
deep learning is designed. Essentially, the multi-factor approximation consists in replacing $K(t):=\frac{t^{\alpha-1}}{\Gamma(\alpha)}$ in Equation (\ref{eq:Z_t}) with 
the approximated kernel function $K^n(t)=\sum_{i=1}^nc_ie^{-\gamma_it}$. Then we have
\begin{equation}
    dZ^i_t = (-\gamma_iZ^i_t - \lambda \sum_{i=1}^nc_iZ^i_t)dt + \eta\sqrt{V_t}dW_t, \qquad Z^i_0 = z^i_0 \, ,
  \label{eq:qrh_model}
\end{equation}
and $Z_t = \sum_{i=1}^nc_iZ^i_t$. The $2n$ parameters $(c_i, \gamma_i)_{i=1,\cdots, n}$ are explicit functions of $\alpha\in (\frac{1}{2}, 1)$, instead of being free parameters to calibrate, 
see \cite{rosenbaum2021deep} and the references therein for more details.
Now given the Markovian nature of the approximated version of the model, the price of many derivatives  of SPX at time $t$, \textit{e.g.} VIX futures, vanilla SPX and VIX options, can be obtained as a function 
of $X_t:=(S_t, Z_t^1,\cdots, Z_t^n)$. In fact, with $E_t[\cdot]:=E[\cdot |\mathcal{F}_t]$ being the conditional expectation under the risk-neutral measure, we have the following representations:
\begin{itemize}
  \item VIX future with expiration $T$: $\mathbb{E}_t[\text{VIX}_T]$ with $\text{VIX}_t:=100\sqrt{-2\mathbb{E}_t[\log(S_{t+\Delta}/S_t)]}$, $\Delta=30$ days.
  \item Vanilla SPX call or put with expiration $T$ and strike $K$: $\mathbb{E}_t[(S_T - K)_{+}]$ or $\mathbb{E}_t[(K-S_T)_{+}]$. 
  \item Vanilla VIX call or put with expiration $T$ and strike $K$: $\mathbb{E}_t[(\text{VIX}_T - K)_{+}]$ or $\mathbb{E}_t[(K-\text{VIX}_T)_{+}]$. 
\end{itemize}
\noindent
Let $P(t, X)$ denote one of these quantities. Under usual regularity conditions\footnote{In practice the price of any derivative concerned here can be approached by a regular approximator, such as the neural network used in \cite{rosenbaum2021deep}.}, we can write
\begin{equation*}
  dP(t, X) = \delta_tdS_t \, ,
\end{equation*}
where 
\begin{equation*}
  \delta_t = \frac{\partial P(t, X)}{\partial S} + \frac{\eta}{S}\sum_{i=1}^n\frac{\partial P(t, X)}{\partial Z^i} \, .
\end{equation*}
An example with vanilla SPX call is taken in \cite{rosenbaum2021deep} to show the computation of $\delta_t$ using neural networks. 
The same approach can be extended to VIX futures and other vanilla options\footnote{For SPX we have naturally $\delta_t\equiv 1$.}.

\vskip 0.15in
\noindent
In \cite{rosenbaum2021deep}, the approximation of the pricing mapping $P(\cdot)$ is learned directly from samples generated through Monte-Carlo simulations. One can also obtain $P(\cdot)$
through solving the following partial differential equation (PDE):
\begin{equation}
  \frac{\partial P}{\partial t} + \sum_{i=1}^n\frac{\partial P}{\partial Z^i}(-\gamma_iZ^i-\lambda \sum_i^nc_iZ^i) + 
                    \frac{1}{2}\sum_{i=1}^n\frac{\partial^2P}{\partial S\partial Z^i}\eta SV 
                     + \frac{1}{2}\sum_{i,j}^n\frac{\partial^2P}{\partial Z^i\partial Z^j}\eta^2V = 0 \, ,
  \label{eq:pricing_pde}                     
\end{equation}
with terminal condition $P(T, X) = g(X)$, where $g(\cdot)$ represents the payoff function of the considered derivative.
It is \textit{a priori} intricate to solve Equation (\ref{eq:pricing_pde}) with conventional methods, especially when $n$ is large. One can apply 
the method with the help of deep learning introduced in \cite{sirignano2018dgm}, which is reported to be effective on high-dimensional PDEs. For the tests presented in Section \ref{sec:num_res}, we rely on the method introduced in \cite{rosenbaum2021deep} for the computation of $\delta_t$.  

% \begin{remark}
%   Under the physical measure, we can consider a more generalized model with time varying drift for $S_t$, \textit{i.e.}
%   \begin{equation*}
%     dS_t = S_t(\mu_t dt + \sqrt{V_t}dW_t) \, .
%   \end{equation*}
% \end{remark}

\subsection{Multi-asset market making}
\label{sec:multi_mm}
Our market making problem is considered over the period $[0, T]$, with $T$ smaller than the expiration of any asset in our selected basket. For asset $j\in\{1,\cdots,d\}$, at each point in time, the market maker decides whether to make a market at the limits $P^j_t$ plus/minus one-half tick size. Let $l^{j,b}_t, l^{j,a}_t\in\{0, 1\}$ represent 
the decisions concerning respectively the bid and ask side. The value $1$ means the participation of the market maker with a limit order of constant size $m^j$, and $0$ stands for the case where the order is not placed in the queue corresponding to the best limits so that the probability of execution
is small. We denote the two point processes modeling the number of transactions at the bid and ask size by $(N^{j,b}_t)_{t\in[0,T]}$ 
and $(N^{j,a}_t)_{t\in[0,T]}$ respectively. Then the dynamics of the inventory process $(q^j_t)_{t\in[0,T]}$ of asset $j$ is given by
\begin{equation*}
  dq^j_t = m^jdN^{j,b}_t - m^jdN^{j,a}_t \, .
\end{equation*}
We denote by $(\lambda^{j,b}_t)_{t\in[0,T]}$ and $(\lambda^{j,a}_t)_{t\in[0,T]}$ the intensity processes of $(N^{j,b}_t)_{t\in[0,T]}$ 
and $(N^{j,a}_t)_{t\in[0,T]}$ respectively, which verify
\begin{equation*}
  \lambda^{j, b}_t = l^{j,b}_t\Lambda^{j,b}\mathbbm{1}_{\{q_t^j+m^j<=Q^j\}} \, , \qquad \lambda^{j, a}_t = l^{j,a}_t\Lambda^{j,a}\mathbbm{1}_{\{q_t^j-m^j>=-Q^j\}} \, ,
\end{equation*}
where $Q^j$ stands for the maximum inventory of asset $j$ that the market maker is willing to hold, which is a multiple of $m^j$ without loss of generality. 
We do not differentiate the queue positions of the orders in this work, and denote the constant execution probability at the bid and ask side by $\Lambda^{j,b}$ and 
$\Lambda^{j, a}$ respectively. 
\vskip 0.15in
\noindent
Denoting by $D^j$ the tick size of asset $j$, the resulting dynamics of the cash process $(Y_t)_{t\in[0,T]}$ of the market maker is:
\begin{equation*}
  \begin{split}
    dY_t &= \sum_{j=1}^d -(P^j_t-\frac{D^j}{2})m^jdN^{j,b}_t + (P^j_t+\frac{D^j}{2})m^jdN^{j,a}_t \, , \\
        &= \sum_{j=1}^d \frac{D^j}{2}m^j(dN^{j,b}_t + dN^{j,a}_t) - P^j_tdq^j_t  \, .
  \end{split}
\end{equation*}
Let $(\Pi_t)_{t\in[0,T]}$ be the process representing the mark-to-market value of the market maker's portfolio, \textit{i.e.} $\Pi_t := Y_t+\sum_{j=1}^dP^j_tq^j_t$. 
Its dynamics are given by
\begin{equation*}
  \begin{split}
    d\Pi_t &= dY_t + \sum_{j=1}^dP^j_tdq^j_t + \sum_{j=1}^dq^j_tdP^j_t \, , \\
          &= \sum_{j=1}^d\frac{D^j}{2}m^j(dN^{j,b}_t + dN^{j,a}_t) + q^j_t\delta^j_tdS_t  \, .
  \end{split}
\end{equation*}
The objective function of the market maker considered here is akin to those used in \cite{cartea2017algorithmic, cartea2014buy, cartea2018algorithmic}, where
the expected terminal wealth, \textit{i.e.} $\Pi_t$, is maximized while the inventory risk is penalized. More precisely, we are interested in the following optimal control
problem:
\begin{equation}
  \sup_{\substack{l^{1,b} \cdots, l^{d,b} \\ l^{1,a},\cdots,l^{d,a}}} \mathbb{E}[\Pi_T  - \sum_{j=1}^d\frac{\kappa_j}{2}\int_0^T\sigma_t^2(q^j_t\delta^j_t)^2dt - \frac{\kappa}{2}\int_0^T\sigma_t^2(\sum_{j=1}^dq^j_t\delta^j_t)^2dt] \, ,
  \label{eq:obj_func}
\end{equation}
with $\sigma_t=S_t\sqrt{V_t}$, $(\kappa_j)_{j=1,\cdots, d}$, and $\kappa$ nonnegative constants. The two items with integration represent the quadratic variation of 
the inventory risk of individual assets and the portfolio respectively. The market maker can thus consider only the whole portfolio's  net inventory risk ($\kappa_j=0, j=1,\cdots,d$) or 
express also some specific views on the inventory risk of certain particular assets ($\kappa_j\neq 0, j\in\{1,\cdots,d\}$). Of course, other objective functions can be thought of, 
such as the widely used constant absolute risk aversion utility function as in \cite{avellaneda2008high}. We refer to \cite{gueant2017optimal} for a discussion on the equivalence between these 
two classes of objective functions. Now we can define the value function
\begin{equation*}
  u:(t, S, Z, q)\in[0,T] \times \mathbb{R}_{+} \times \mathbb{R}^n \times \mathcal{Q} \mapsto u(t, S, Z, q)
\end{equation*}
associated with (\ref{eq:obj_func}) as
\begin{equation*}
  \begin{split}
    u(t, S, Z, q) = \sup_{\big(\substack{l^{1,b}_s \cdots, l^{d,b}_s \\ l^{1,a}_s,\cdots,l^{d,a}_s}\big)_{s\in[t,T]}}
                    \mathbb{E}_{t, S, Z, q}\Big[&\int_t^T \sum_{j=1}^d (\sum_{k=a,b}\frac{D^j}{2}m^jl_s^{j,k}\Lambda^{j,k}
                        \mathbbm{1}_{\{q^j_s+\phi^km^j\in\mathcal{Q}^j\}}ds + q^j_s\delta^j_sdS_s)  \\
                     &  -\sum_{j=1}^d\frac{\kappa_j}{2}\int_t^T\sigma_s^2(q^j_s\delta^j_s)^2ds - \frac{\kappa}{2}\int_t^T\sigma_s^2(\sum_{j=1}^dq^j_s\delta^j_s)^2ds 
                        \Big] \, ,
  \end{split}
\end{equation*}
where $Z:=(Z^1,\cdots,Z^n)$, $q:=(q^1,\cdots,q^d)$, $\mathcal{Q}^j=\{-Q^j, -Q^j+m^j, \cdots, Q^j\}$, $\mathcal{Q}=\prod_{j=1}^d\mathcal{Q}^j$ and
\begin{equation*}
  \phi^j = \begin{cases}
        &+1 \quad \text{if} \quad j=b \, , \\
        &-1 \quad \text{if} \quad j=a \, .
        \end{cases}
\end{equation*}
Therefore, $u$ has $(2+n+d)$ variables. We introduce in the following some approximations to bypass this curse of dimensionality. 

\section{Value function approximation}
\label{sec:prob_approx}

\subsection{The Hamilton-Jacobi-Bellman equations}
To reduce the dimensionality coming from the multidimensional nature of the QRH model, we suggest the following approximation:

\begin{approximation}
\label{app:const_z}
We approximate $(Z^i_t)_{t\in[0,T], i=1,\cdots,n}$ and $(\delta^j_t)_{t\in[0,T], j=1,\cdots,d}$ by its initial value, \textit{i.e.}
\begin{equation*}
  \begin{split}
      Z^i_t &= Z^i_0  \quad i=1,\cdots,n \, , \\
      \delta^j_t &= \delta^j_0 =: \delta^j \quad j=1,\cdots,d \, ,
  \end{split}
\end{equation*}
for $t\in[0,T]$. Then we have $V_t=V_0$ for $t\in[0,T]$ by (\ref{eq:qrh_model}). We replace the dynamic of $(S_t)_{t\in[0,T]}$ by
  \begin{equation*}
    dS_t = \mu dt + \sigma dW_t, \quad t\in[0, T] \, ,
  \end{equation*}
where $\sigma := S_0\sqrt{V_0}$ and we allow a constant drift $\mu$.
\end{approximation}

\vskip 0.15in
\noindent
The above approximation is relevant when the time horizon of the problem is relatively short. 
In \cite{rosenbaum2021deep} the effectiveness of daily hedging under the QRH model is illustrated through numerical tests on simulated and market data.
We give additional daily hedging results in the next section. All these examples indicate that it is reasonable to consider $\delta$ constant at the daily scale.    
In practice, the market maker can always reset the algorithm with updated parameters in the case of significant market movement. We recall that the computation of $(\delta^j_0)_{j=1,\dots,d}$ and $V_0$ at the beginning of the day is ruled by the QRH model, see \cite{rosenbaum2021deep} for more details. Now the value function satisfies
\begin{equation*}
  \forall (t, S, Z, q)\in [0,T] \times \mathbb{R}_{+} \times \mathbb{R}^n \times \mathcal{Q}, u(t, S, Z, q) = v(t, q) \, ,
\end{equation*}
where 
\begin{equation}
  \begin{split}
      v(t, q) = \sup_{\big(\substack{l^{1,b}_s \cdots, l^{d,b}_s \\ l^{1,a}_s,\cdots,l^{d,a}_s}\big)_{s\in[t,T]}} \mathbb{E}_{t,q}\big[&
          \int_t^T\sum_{j=1}^d(\sum_{k=a,b}\frac{D^j}{2}m^jl_s^{j,k}\Lambda^{j,k}\mathbbm{1}_{\{q_s^j+\phi^km^j\in\mathcal{Q}^j\}} + q^j_s\delta^j\mu)ds \\
          & -\sum_{j=1}^d\frac{\kappa_j}{2}\sigma^2\int_t^T(q^j_s\delta^j_s)^2ds - \frac{\kappa}{2}\sigma^2\int_t^T(\sum_{j=1}^dq^j_s\delta^j)^2ds
      \big] \, .
  \end{split}
  \label{eq:value_func_v}
\end{equation}
Now the value function $u(\cdot)$ is transformed to one with $(1+d)$ variables. Following \cite{oksendal2005stochastic}, the HJB equation associated with (\ref{eq:value_func_v})
is given by a system of ODEs verifying
\begin{equation}
    \begin{split}
        0 = &-\frac{\partial v}{\partial t} - \mu\sum_{j=1}^dq^j\delta^j+\sum_{j=1}^d\frac{\kappa_j}{2}\sigma^2(q^j\delta^j)^2 + \frac{\kappa}{2}\sigma^2(\sum_{j=1}^dq^j\delta^j)^2  \\
        &-\sum_{j=1}^d\sum_{k=a,b}\mathbbm{1}_{\{q^j + \phi^kz^j\in\mathcal{Q}^j\}} \sup_{l^{j,k}\in\{0,1\}}l^{j,k}\Lambda^{j,k}(m^j\frac{D^j}{2} + v(t, q+\phi^km^je^j) - v(t, q)) \, ,
    \end{split}
    \label{eq:hjb}
\end{equation}
with terminal condition
\begin{equation}
  v(T, q) = 0 \, ,
  \label{eq:hjb_terminal}
\end{equation}
where $\{e^j\}_{j=1}^d$ is the canonical basis of $\mathbb{R}^d$. We can rewrite the terms involving the control variables as 
\begin{equation*}
  \begin{split}
      \mathbbm{1}_{\{q^j + \phi^km^j\in\mathcal{Q}^j\}}&\sup_{l^{j,k}\in\{0,1\}}l^{j,k}\Lambda^{j,k}(m^j\frac{D^j}{2} + v(t, q+\phi^km^je^j) - v(t, q)) \\
     =: &  \mathbbm{1}_{\{q^j + \phi^kz^j\in\mathcal{Q}^j\}}m^jH^{j,k}(\frac{v(t,q) - v(t, q+\phi^km^je^j)}{m^j}) \, ,
  \end{split}
\end{equation*}
with
\begin{equation}
  H^{j,k}(p) = \Lambda^{j,k}\mathbbm{1}_{\{p \leq \frac{D^j}{2}\}}(\frac{D^j}{2} - p )  \\  \, .
  \label{eq:hamiltonian}
\end{equation}

\vskip 0.15in
\noindent
In the following we give the results on the existence and uniqueness of a solution of (\ref{eq:hjb}) with terminal 
condition (\ref{eq:hjb_terminal}). We give the sketch of the proof as it follows the same steps as in 
\cite{gueant2017optimal}. 
\begin{theorem}
  There exists a unique solution $v:[0,T]\times \mathcal{Q} \mapsto \mathbb{R}$, $C^1$ in time, solution of Equation (\ref{eq:hjb}) with terminal condition (\ref{eq:hjb_terminal}).
\end{theorem}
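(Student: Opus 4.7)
The plan is to exploit the fact that the spatial variable $q$ lives in the \emph{finite} set $\mathcal{Q}=\prod_{j=1}^d\mathcal{Q}^j$, so (\ref{eq:hjb})-(\ref{eq:hjb_terminal}) is not a PDE but a finite-dimensional system of coupled ODEs indexed by $q$, to which I can apply a standard Cauchy-Lipschitz argument exactly as in \cite{gueant2017optimal}.

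First, I would fix an enumeration of $\mathcal{Q}$ and identify any candidate $v(\cdot,\cdot)$ with a vector-valued map $V:[0,T]\to\mathbb{R}^{|\mathcal{Q}|}$ whose coordinates are $V_q(t):=v(t,q)$. Rewriting (\ref{eq:hjb}) in this notation gives an ODE of the form $V'(t)=F(V(t))+G(t)$, where $G$ collects the terms that do not depend on $v$ (the drift term $-\mu\sum_j q^j\delta^j$ and the two quadratic-in-$q$ penalty terms, which are continuous deterministic functions of $(t,q)$ only), and $F:\mathbb{R}^{|\mathcal{Q}|}\to\mathbb{R}^{|\mathcal{Q}|}$ collects the Hamiltonian contributions
\[
F_q(V)=-\sum_{j=1}^d\sum_{k=a,b}\mathbbm{1}_{\{q^j+\phi^k m^j\in\mathcal{Q}^j\}}\,m^j H^{j,k}\!\Bigl(\tfrac{V_q-V_{q+\phi^k m^j e^j}}{m^j}\Bigr),
\]
with the terminal condition $V(T)=0$.

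Next, I would verify that $F$ is globally Lipschitz. From (\ref{eq:hamiltonian}), $H^{j,k}(p)=\Lambda^{j,k}(D^j/2-p)_+$ is the positive part of an affine function, hence Lipschitz in $p$ with constant $\Lambda^{j,k}$. Since the argument of each $H^{j,k}$ in $F_q$ is the linear map $V\mapsto(V_q-V_{q+\phi^k m^j e^j})/m^j$ and only finitely many such terms appear, $F$ is Lipschitz on $\mathbb{R}^{|\mathcal{Q}|}$ with a constant depending only on the $\Lambda^{j,k}$, $m^j$, and $|\mathcal{Q}|$. The inhomogeneous term $G(t)$ is continuous on $[0,T]$ and bounded. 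The standard Cauchy-Lipschitz/Picard-Lindelöf theorem therefore yields a unique $C^1$ maximal solution; global Lipschitzness of $F$ rules out finite-time blow-up, so the solution is defined on all of $[0,T]$ and the terminal value problem (\ref{eq:hjb})-(\ref{eq:hjb_terminal}) admits a unique $C^1$ solution $v$.

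The only mildly delicate step is the Lipschitz verification, because of the indicator in $H^{j,k}$; however, since $H^{j,k}(p)=\Lambda^{j,k}\max(D^j/2-p,0)$ is a composition of Lipschitz functions, no real difficulty arises. All other steps (dimension reduction via finiteness of $\mathcal{Q}$, reformulation as a vector ODE, application of Picard-Lindelöf) are essentially bookkeeping, which is why the authors only sketch the proof and refer to \cite{gueant2017optimal}.
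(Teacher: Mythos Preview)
Your proposal is correct, but it takes a slightly different and in fact more elementary route than the paper. The paper follows the template of \cite{gueant2017optimal}: it first invokes Cauchy--Lipschitz to obtain a \emph{local} solution on some interval $(\tau,T]$, then rules out finite-time blow-up by a comparison argument, exhibiting the explicit supersolution $\bar v(t,q)=\sum_{j=1}^d\bigl(H^{j,b}(0)+H^{j,a}(0)+|\mu\delta^j|Q^j\bigr)(T-t)$ and using the monotonicity of $H^{j,k}$. You instead observe directly that $H^{j,k}(p)=\Lambda^{j,k}(D^j/2-p)_+$ is \emph{globally} Lipschitz, so the vector field $F$ on $\mathbb{R}^{|\mathcal{Q}|}$ is globally Lipschitz and Picard--Lindel\"of already gives a global solution without any comparison or supersolution step. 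Your argument is shorter and fully sufficient here; the paper's approach has the advantage of transferring verbatim to settings (as in \cite{gueant2017optimal}) where the Hamiltonians are only locally Lipschitz and an a~priori bound is genuinely needed to prevent blow-up. One minor remark: your inhomogeneous term $G$ is in fact constant in $t$ (the penalty and drift terms depend only on $q$), so calling it ``continuous on $[0,T]$'' is correct but understates how simple it is.
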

\begin{proof}
   It is clear that $H^{j,k}$ are Lipschitz continuous functions for all $j\in\{1,\cdots,d\}$ and $k\in\{a,b\}$. Equation (\ref{eq:hjb}) with terminal condition (\ref{eq:hjb_terminal})
   can be viewed as a backward Cauchy problem. According to the Cauchy-Lipschitz theorem, there exists some $\tau\in[0, T)$ such that the Equation (\ref{eq:hjb}) with the 
   terminal condition (\ref{eq:hjb_terminal}) has a unique solution, $C^1$ in time, on the interval $(\tau, T)$.
   \vskip 0.1in
   \noindent
   $\forall q\in\mathcal{Q}$, it is clear that the function $t\in(\tau, T) \mapsto v(t, q) + (-\mu\sum_{j=1}^dq^j\delta^j + \sum_{j=1}^d\frac{\kappa_j}{2}\sigma^2(q^j\delta^j)^2 + \frac{\kappa}{2}\sigma^2(\sum_{j=1}^dq^j\delta^j)^2)(T-t)$ 
   is a decreasing function. Therefore, the only reason why there would not be a global solution on $[0, T]$ is because $\sup_{q\in\mathcal{Q}}v(t,q)$ blows up at $\tau>0$. By the fact 
   that $H^{j,k}$ is decreasing, a classical comparison principle can be obtained. It is clear that $\bar{v}(t,q)=\sum_{j=1}^d(H^{j,b}(0) + H^{j,a}(0) + |\mu\delta^j|Q^j)(T-t)$ defines a supersolution 
   of (\ref{eq:hjb}) with terminal condition (\ref{eq:hjb_terminal}). Then by the comparison principle, we have
   \begin{equation*}
      \sup_{q\in\mathcal{Q}}v(t, q) \leq \sum_{j=1}^d(H^{j,b}(0) + H^{j,a}(0) + |\mu\delta^j|Q^j)(T-t) \, .
   \end{equation*}
   This shows that $v$ is bounded for $t\in[0, T]$, which leads to its existence and uniqueness on $[0, T]\times \mathcal{Q}$.  
\end{proof}
\vskip 0.15in
\noindent
By a classical verification argument, we can get the following result:
\begin{theorem}
  Considering the solution $v$ of Equation (\ref{eq:hjb}) with terminal condition (\ref{eq:hjb_terminal}), the optimal market making decisions in 
  the problem (\ref{eq:value_func_v}) are given by
  \begin{equation*}
    l^{j,k\ast}_t(q)= \mathbbm{1}_{\{q^j_{t-} + \phi^km^j\in\mathcal{Q}^j, \frac{v(t, q_{t-}) - v(t, q_{t-}+\phi^km^je^j)}{m^j}\leq \frac{D^j}{2}\}} \, ,\quad \forall j\in\{1,\cdots, d\}, k\in\{a,b\} \, .
  \end{equation*}
  \label{theo:control}
\end{theorem}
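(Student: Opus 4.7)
The plan is a classical verification argument adapted to the controlled pure-jump structure of the inventory process. First I would fix an arbitrary admissible feedback control $(l^{j,k}_s)_{s\in[t,T]}$, let $q_s$ denote the resulting inventory trajectory, and apply Itô's formula for pure-jump processes to $v(s, q_s)$ between $t$ and $T$. Since $q$ only takes values in the finite set $\mathcal{Q}$ and $v$ is $C^1$ in time by the previous theorem, this yields an identity that, after taking conditional expectations and using the compensator $\lambda^{j,k}_s\,ds = l^{j,k}_s\Lambda^{j,k}\mathbbm{1}_{\{q^j_s+\phi^k m^j \in\mathcal{Q}^j\}}\,ds$ of $N^{j,k}$, expresses $v(t, q_t) - \mathbb{E}_{t,q}[v(T, q_T)]$ as a predictable $ds$-integral of $-\partial_s v$ together with the jump compensator terms. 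The terminal condition $v(T,\cdot)=0$ eliminates the $v(T, q_T)$ term, and boundedness of $v$ on $\mathcal{Q}$ ensures that the compensated Poisson martingale parts vanish in expectation.

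Next I would substitute for $\partial_t v$ using the HJB equation (\ref{eq:hjb}) and rewrite the jump compensator contributions through the definition of $H^{j,k}$ in (\ref{eq:hamiltonian}). After rearrangement, this produces, for every admissible $l$,
\begin{equation*}
v(t, q_t) = \mathbb{E}_{t,q}\big[J(l)\big] + \mathbb{E}_{t,q}\Big[\int_t^T \sum_{j,k}\mathbbm{1}_{\{q^j_s+\phi^k m^j\in\mathcal{Q}^j\}}\big(H^{j,k}(p^{j,k}_s) - l^{j,k}_s\Lambda^{j,k}(\tfrac{D^j}{2} - p^{j,k}_s)\big)\,ds\Big],
\end{equation*}
where $J(l)$ denotes the running payoff functional appearing inside the expectation in (\ref{eq:value_func_v}) and $p^{j,k}_s := \big(v(s, q_s) - v(s, q_s+\phi^k m^j e^j)\big)/m^j$. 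Because $H^{j,k}(p) = \sup_{l\in\{0,1\}} l\,\Lambda^{j,k}(D^j/2 - p)$, with maximizer $l=1$ exactly on $\{p \leq D^j/2\}$ and $l=0$ otherwise, the integrand inside the second expectation is pointwise non-negative, which yields $v(t, q_t) \geq \mathbb{E}_{t,q}[J(l)]$ for every admissible $l$.

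Finally I would plug in the candidate $l^{j,k\ast}_s = \mathbbm{1}_{\{q^j_{s-}+\phi^k m^j\in\mathcal{Q}^j,\, p^{j,k}_s \leq D^j/2\}}$: by construction it realizes the pointwise argmax in the Hamiltonian, so the integrand of the gap vanishes identically, the inequality becomes an equality, and $l^\ast$ attains the supremum in (\ref{eq:value_func_v}). The main obstacle, rather than any algebra, is the justification of the Itô expansion for the controlled pure-jump process together with the fact that the compensated jump integrals are genuine martingales; both points reduce to boundedness of $v$ on $[0,T]\times\mathcal{Q}$ (from the previous theorem) and boundedness of the $\Lambda^{j,k}$, so they do not constitute a genuine obstruction in this finite-state setting.
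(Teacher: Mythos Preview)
Your proposal is correct and coincides with the paper's own approach: the paper does not actually write out a proof of this theorem but simply states that it follows ``by a classical verification argument'', which is exactly the Itô--compensator--HJB substitution scheme you describe. One cosmetic slip is that the gap integrand should carry the factor $m^j$ (the HJB involves $m^j H^{j,k}$, not $H^{j,k}$), but since $m^j>0$ this does not affect the non-negativity argument or the conclusion.
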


\begin{remark}
  \label{re:port_risk}
  When the market maker controls only the portfolio's net risk, \textit{i.e.} $\forall j\in\{1,\cdots, d\}, \kappa_j =0$ and $Q^j\rightarrow +\infty$, the $d$-dimensional variable $q$ can be
  summarized with one variable. More precisely, by introducing the maximum risk bound $R>0$ and $r_t:=\sum_{j=1}^dq^j_t\delta^j$, now the value function $\theta:(t, r)\in[0,T]\times[-R, R] \mapsto \theta(t,R)$
  associated with the problem can be written as
  \begin{equation}
      \theta(t, r) = \sup_{\big(\substack{l^{1,b}_s \cdots, l^{d,b}_s \\ l^{1,a}_s,\cdots,l^{d,a}_s}\big)_{s\in[t,T]}} 
      \mathbb{E}_{t,r}\big[\int_t^T\sum_{j=1}^d\sum_{k=a,b}\frac{D^j}{2}m^jl_s^{j,k}\Lambda^{j,k}\mathbbm{1}_{\{r_s+\phi^km^j\delta^j\in[-R, R]\}} 
      + \mu r_s ds - \frac{\kappa}{2}\sigma^2\int_t^T(r_s)^2ds \big] \, .
  \label{eq:value_func_theta}
  \end{equation}
And the associated HJB equation is given by
\begin{equation*}
  \begin{split}
    0 = & -\frac{\partial\theta}{\partial t} -\mu r + \frac{\kappa}{2}\sigma^2r^2   \\
      &-\sum_{j=1}^d\sum_{k=a,b}\mathbbm{1}_{\{r + \phi^km^j\delta^j\in\mathcal{R}\}}m^jH^{j,k}(\frac{\theta(t,r) - \theta(t, r+\phi^km^j\delta^j)}{m^j}) \, ,
  \end{split}
  \label{eq:hjb_port_risk}
\end{equation*}
with terminal condition $\theta(T, r) = 0$.
\end{remark}

\noindent

\subsection{Quadratic approximation}
In our case with a portfolio made of SPX and numerous derivatives, Equation (\ref{eq:hjb}) becomes intricate to solve with classical 
numerical methods. Here we follow the idea of \cite{bergault2021closed} that is approximating the Hamiltonian functions $H^{j,k}(\cdot)$ 
with quadratic ones. Then closed-form solution can be obtained for the new HJB equation. Moreover, the asymptotic of the solution can be derived, which are helpful for practical use. Note that most of the development in this part follows the same methodologies as in \cite{bergault2021closed}. We give some key results in the following to avoid any ambiguity.

\vskip 0.15in
\noindent
The market making problem considered in \cite{bergault2021closed} is more adapted to over-the-counter market, where the optimal
control is the quoting price and is taken to be continuous. The resulting Hamiltonian functions are $C^2$ and a natural choice is to 
use Taylor expansions as the approximations. Clearly this do not apply for (\ref{eq:hamiltonian}). In this work we propose the 
following choice as the approximated Hamiltonian functions:
\begin{equation}
   \hat{H}^{j,k}(p) = \Lambda^{j,k}(\frac{\alpha^j}{2}p^2-p+\frac{D^j}{2}),
   \label{eq:hamiltonian_approx}
\end{equation}
with $\alpha^j > 0$. Actually we have $\hat{H}^{j,k}(0) = H^{j,k}(0)$ and $(\hat{H}^{j,k})^{\prime}(0)= (H^{j,k})^{\prime}(0)$ 
for any $(j,k)\in\{1,\cdots,d\}\times\{a,b\}$. We can choose $\alpha^j\sim\frac{1}{D^j}$ to make $\hat{H}^{j,\cdot}$ close to $H^{j,\cdot}$
around $0$. In the case without
maximal inventory constraints for individual assets, \textit{i.e.} $Q^j=+\infty, j\in\{1,\cdots,d\}$, following the same computations 
as in \cite{bergault2021closed},  the approximation of $v$ associated with $(\hat{H}^{j,k})_{j\in\{1,\cdots,d\}, k\in\{a,b\}}$, 
denoted by $\hat{v}$, can be written as a quadratic function of $q$, \textit{i.e.}
\begin{equation*}
  \hat{v}(t, q) = -q^TA(t)q-q^TB(t)-C(t) \, ,
\end{equation*}
where $A:[0,T]\mapsto S_d^+, B:[0,T]\mapsto\mathbb{R}^d$ and $C:[0,T]\mapsto \mathbb{R}$ are given by the unique solution of 
a system of ODEs\footnote{We denote by $S_d^+$ the set of positive semi-definite symmetric matrices of dimension $d\times d$.}. 
Since we are more interested in the behavior of $\hat{v}$ as $T\rightarrow +\infty$, 
we focus mostly on the asymptotic formulas of $A$ and $B$\footnote{We ignore $C$ since it is irrelevant for the dependence of the value function 
on the states.}. Let $\delta:=(\delta^1, \cdots, \delta^d)^T$ and $\Sigma:=diag(\frac{\kappa_1(\delta^1)^2}{\kappa},\cdots, \frac{\kappa_d(\delta^d)^2}{\kappa}) + \delta\delta^T$.
We have 
\begin{equation*}
  \begin{split}
      A(0)  &\overset{T\rightarrow+\infty}{\rightarrow}  \frac{\sigma}{2}\sqrt{\kappa}\Gamma \, , \\
      B(0)  &\overset{T\rightarrow+\infty}{\rightarrow} - D_{+}^{-\frac{1}{2}}\hat{A}^+D_+^{\frac{1}{2}}\delta\mu - D_+^{-\frac{1}{2}}\hat{A}\hat{A}^+D_+^{-\frac{1}{2}}(V_- + \frac{\sigma}{2}\sqrt{\kappa}D_- \mathcal{D}(\Gamma))\, ,
  \end{split}
\end{equation*}
where
\begin{equation*}
  \begin{split}
    D_+ &= diag\big((\Lambda^{1,b}+\Lambda^{1,a})\alpha^1z^1, \cdots, (\Lambda^{d,b}+\Lambda^{d,a})\alpha^dz^d\big) \, ,\\
    D_- &= diag\big((\Lambda^{1,b}-\Lambda^{1,a})\alpha^1z^1, \cdots, (\Lambda^{d,b}-\Lambda^{d,a})\alpha^dz^d\big) \, , \\
    V_- &= \big( (-\Lambda^{1,b}+\Lambda^{1,a})z^1, \cdots, (-\Lambda^{d,b}+\Lambda^{d,a})z^d\big) \, , \\
    \Gamma &= D^{-\frac{1}{2}}_+(D^{\frac{1}{2}}_+\Sigma D^{\frac{1}{2}}_+)^{\frac{1}{2}}D^{-\frac{1}{2}}_+\, , \quad 
    \hat{A} = \sqrt{\kappa V}(D^{\frac{1}{2}}_+\Sigma D^{\frac{1}{2}}_+)^{\frac{1}{2}} \, ,
  \end{split}
\end{equation*}
$\hat{A}^+$ is the Moore-Penrose inverse of $\hat{A}$ and $\mathcal{D}$ is the linear operator mapping a matrix onto the vector of its diagonal coefficients.
As we focus on the asymptotic of the approximated solution in the following, we use $A, B$ and $\hat{v}(q)$ to represent the respective asymptotic formulas for ease of notation.
As suggested in \cite{bergault2021closed}, for any $(j,k)\in\{1,\cdots, d\}\times\{a, b\}$, one important application of $\hat{v}$
is to make greedy decisions following
\begin{equation*}
  \begin{split}
    \tilde{l}^{j,k\ast}(q) &= \mathbbm{1}_{\{q^j+\phi^km^j\in\mathcal{Q}^j, \frac{\hat{v}(q)-\hat{v}(q_+\phi^km^je^j)}{m^j}\leq \frac{D^j}{2}\}} \\
    &= \mathbbm{1}_{\{q^j+\phi^km^j\in\mathcal{Q}^j, 2\phi^k(e^j)^TAq + m^j(e^j)^TAe^j + \phi^k(e^j)^TB \leq \frac{D^j}{2}\}}\, . 
  \end{split}
\end{equation*}
It is clear that the set $\{q|\tilde{l}^{j,k\ast}=1, q\in\mathcal{Q}\}$ and $\{q|\tilde{l}^{j,k\ast}=0, q\in\mathcal{Q}\}$ are separated by an affine hyperplane defined on $\mathcal{Q}$. 
In the particular case with only portfolio-level net risk controlled, as suggested in Remark \ref{re:port_risk}, and symmetric order execution intensities on the bid and ask sides, 
\textit{i.e.} $\Lambda^{j,a}=\Lambda^{j,b}=:\Lambda^j$ for all $j\in\{0,\cdots,d\}$, applying the above quadratic approximation on Equation (\ref{eq:hjb_port_risk}), the optimal decisions 
can be simplified as 
\begin{equation*}
  \tilde{l}^{j, k\ast}(r) = \mathbbm{1}_{\{r+ \phi^km^j\delta^j\in[-R, R], 2Dr\delta^j + (\delta^j)^2m^j \leq \frac{D^j}{2}\}} \, ,
\end{equation*}
where
\begin{equation*}
  D = \frac{\sigma}{2}\sqrt{\frac{\kappa}{2\sum_j^d(\delta^j)^2\Lambda^jm^j\alpha^j}} \, .
\end{equation*}

\begin{remark}
  In the market making problem introduced in Section \ref{sec:multi_mm}, the agent is restricted to sending limit orders only on the first bid and ask limits.
  The problem can be easily extended to cover other limits. The resulting Hamiltonian functions can also be approximated and then a closed-form asymptotic solution can be obtained 
  in a similar manner as the above.
\end{remark}
 
% \subsection{Beyond the constant volatility factor assumption}
% ToSee

\section{Numerical results}
\label{sec:num_res}

\subsection{Daily hedging with SPX}
In this part, we perform discrete hedging to evaluate the relevance of the QRH model and the idea of using constant $\delta_t$ during a short period, as suggested in Approximation \ref{app:const_z}. 
For asset $j$, the cumulative profit of discrete hedging with a time step $\Delta t$ is given by
\begin{equation*}
  \mathcal{J}_t^{j,\delta} = \sum_{k=0}^{\lfloor t/\Delta t \rfloor - 1}\hat{\delta}_{t_k}^j\Delta S_{t_{k+1}} + \hat{\delta}^j_{t_{\lfloor t/\Delta t\rfloor}}(S_t - S_{\lfloor t/\Delta t\rfloor \Delta t}) \, ,    
\end{equation*}
where $t_k=k\Delta t$, $\Delta S_{t_k}:=S_{t_k} - S_{t_{k-1}}$, and $\hat{\delta}_{t_k}^j$ is the hedging quantity at $t_k$ computed by neural networks. 
We compare $\mathcal{J}_t^{j,\delta}$ with the price evolution of asset $j$ observed in the market, defined by
\begin{equation*}
  \mathcal{J}_t^{j,P} = P^j_t - P^j_0 \, .
\end{equation*}
Figure \ref{fig:hedge_exp_1} and \ref{fig:hedge_exp_2} show the results of two experiments of daily hedging on SPX call, VIX call and VIX future with the following characteristics:
\begin{enumerate}
  \item SPX call: maturity 2017-07-21, strike 2460; VIX call: maturity 2017-07-19, strike 11.5; VIX future: maturity 2017-07-19.
  \item SPX call: maturity 2017-10-20, strike 2550; VIX call: maturity 2017-10-18, strike 13; VIX future: maturity 2017-10-18.
\end{enumerate}
On day zero of each experiment, we calibrate the QRH model with market data and compute $(\hat{\delta}_{t_0}^j)_{j=1,\dots,d}$. For each following day, $(Z^i)_{i=1,\dots,n}$ are updated with respect to the evolution of SPX, and then $(\hat{\delta}^j)_{j=1,\dots,d}$ are recomputed. Remarkably in both tests, the hedging portfolios can follow well the market price's evolution with daily rebalancing for all assets. 
It implies that in practice high-frequency hedging is not mandatory to obtain satisfactory results, and verifies 
the rationality of Approximation \ref{app:const_z} over the time horizon of the considered market making problem, which is usually shorter than one day.   

\begin{figure}[!h]
  \centering
  \includegraphics[width=\textwidth]{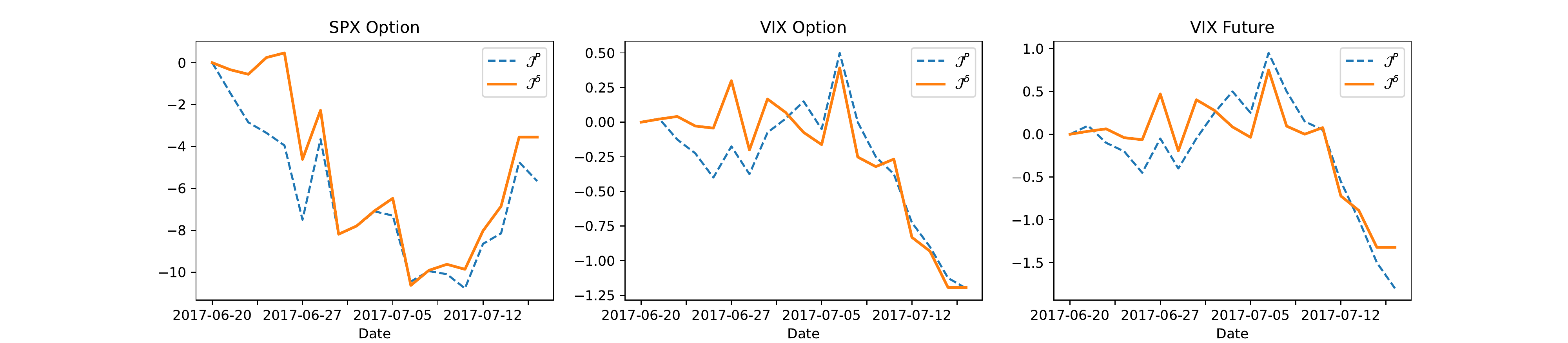}
  \caption{Daily hedging example 1.}
  \label{fig:hedge_exp_1}
\end{figure}

\begin{figure}[!h]
  \centering
  \includegraphics[width=\textwidth]{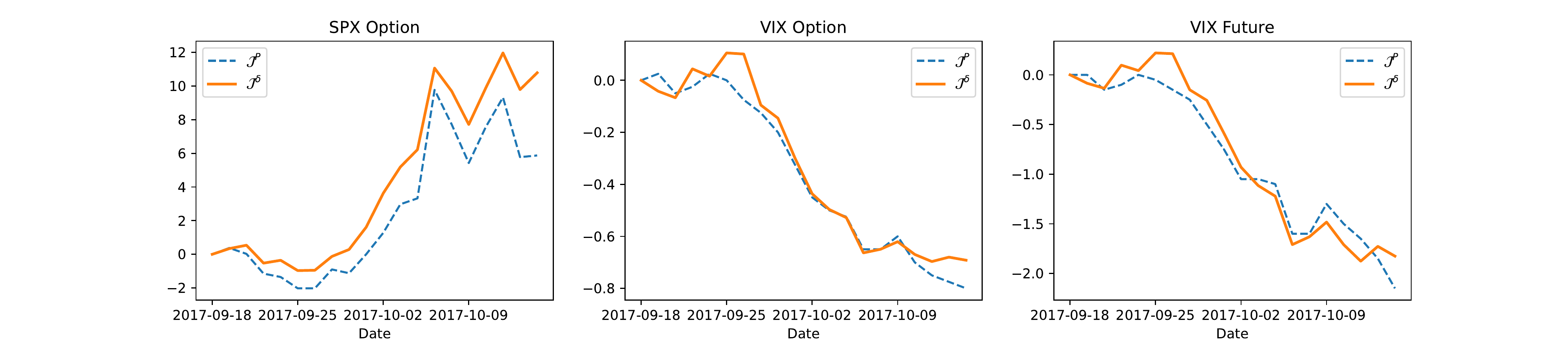}
  \caption{Daily hedging example 2.}
  \label{fig:hedge_exp_2}
\end{figure}

\subsection{Market making approximation}
\subsubsection{Example 1}
In this part, we assess the proposed quadratic approximation. We begin with a two-asset example with SPX and VIX future. 
The parameters of the QRH model are taken as follows\footnote{The same multi-factor approximation method as that in \cite{rosenbaum2021deep} is used, leading to the same $(c_i)_{i=1,\cdots,10}$ and $(\gamma_i)_{i=1,\cdots,10}$}:
\begin{equation*}
  \begin{split}
        \lambda &= 1.7, \quad \eta=1.5, \quad a=0.265, \quad b=0.246, \quad c=0.0001,   \\
        Z_0 &= (-0.009,  0.015,  0.011,  0.036,  0.002, -0.011, -0.018,  0.074, 0.142, -0.171) \, .
  \end{split}
\end{equation*}
Then we have $V_0 = 0.18\, \text{year}^{-1}$. We set $S_0=3000$\$, and $\mu_0=0\, \text{year}^{-1}$. 
The maturity of VIX future is set to be one month. Along with $\delta^1=1$, using the neural network approximating the pricing function of VIX future, 
we have $\delta^2 = -0.028$.

\vskip 0.15in
\noindent
As for the market making related parameters, we consider the following values:
\begin{itemize}
  \item Tick size: $D^1 = 0.25\$, D^2 = 0.05\$\,.$
  \item Order size: $m^1 = 1, m^2 = 20\,.$
  \item Maximum inventory: $Q^1 = 15, Q^2=300\,.$
  \item Execution intensity: $\Lambda^{1,b}=\Lambda^{1,a}=1\,\text{seconds}^{-1}, \Lambda^{2,b}=\Lambda^{2,a}=0.1\,\text{seconds} \,.$
  \item Risk penalization: $\kappa^1=0.005\$^{-1} , \kappa^2=0.005\$^{-1}, \kappa= 0.01\$^{-1}\,.$
  \item Time horizon: $T=300 \,\text{seconds}$.
\end{itemize}
Here some characteristics of SPX are referred to the front month E-mini S\&P 500 future. 
In real life, the contract multiplier of E-mini S\&P 500 future is 50, and that of VIX future is 1000.
Here we normalize them against 50 to have order sizes. The execution intensities are 
selected such that on average the volume of SPX is 10 times that of VIX future. 

\vskip 0.15in
\noindent
First, we use a classical Euler scheme to approximate numerically 
the true value function $v(t, q)$. Figure \ref{fig:value_func_two_euler} shows the results for several $t\in[0, T]$.  
Interestingly, the shapes of $v(t, q)$ at different $t$ are very similar, implying that time
does not play a crucial role for 
the optimal controls according to Theorem \ref{theo:control}. This is confirmed by Figure \ref{fig:optim_ctr_two_euler}, 
showing that for SPX or VIX future, the set $\{q|l^{\cdot,b\ast}=0\}$ does not change a lot with $t$ when $t$ is relatively small.

\begin{figure}[!h]
  \centering
  \includegraphics[width=.8\textwidth]{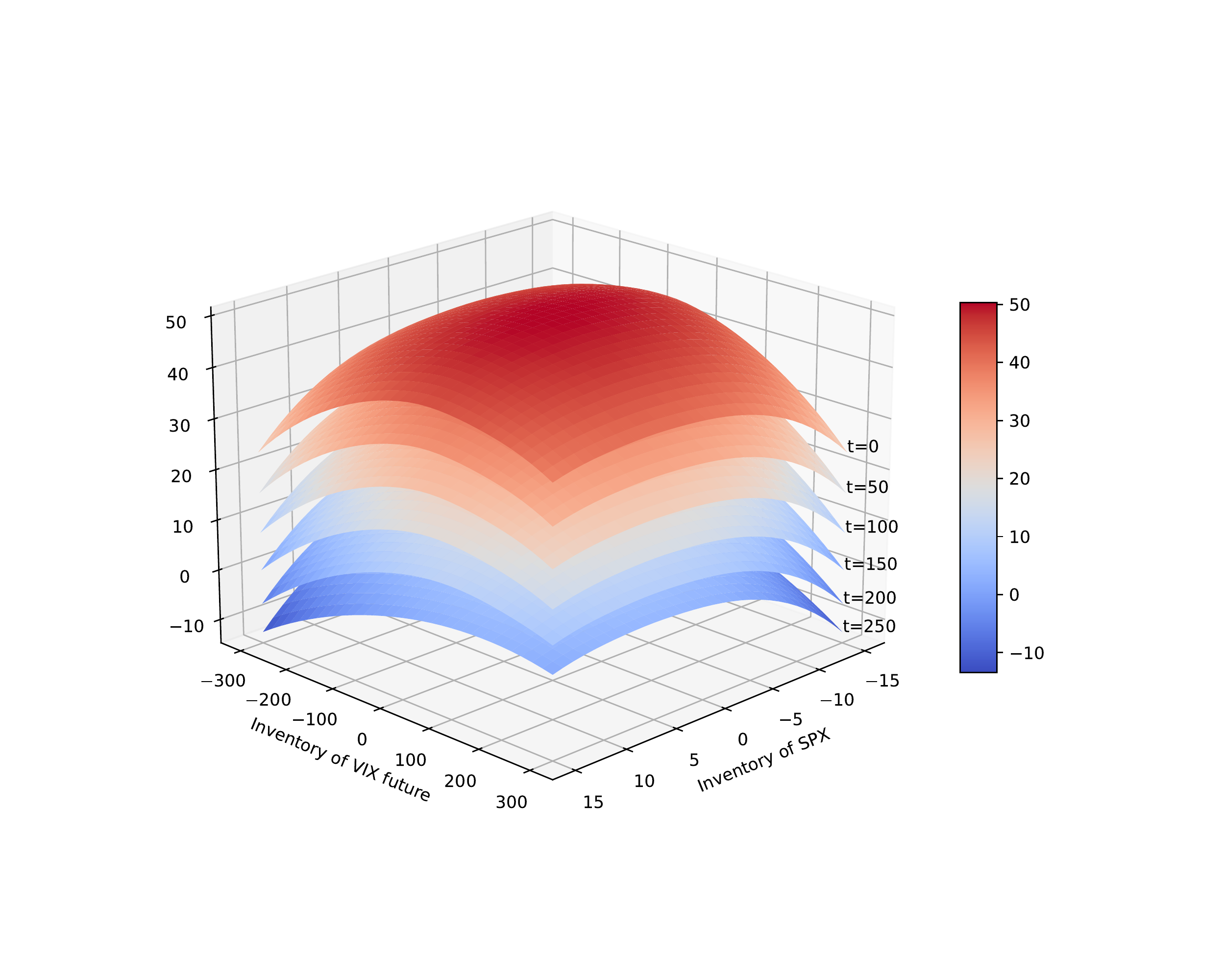}
  \caption{Snapshots of the value function $v(t, q)$ for t=0, 50, 100, 150, 200, 250.}
  \label{fig:value_func_two_euler}
\end{figure}

\begin{figure}[!h]
  \centering
  \includegraphics[width=\textwidth]{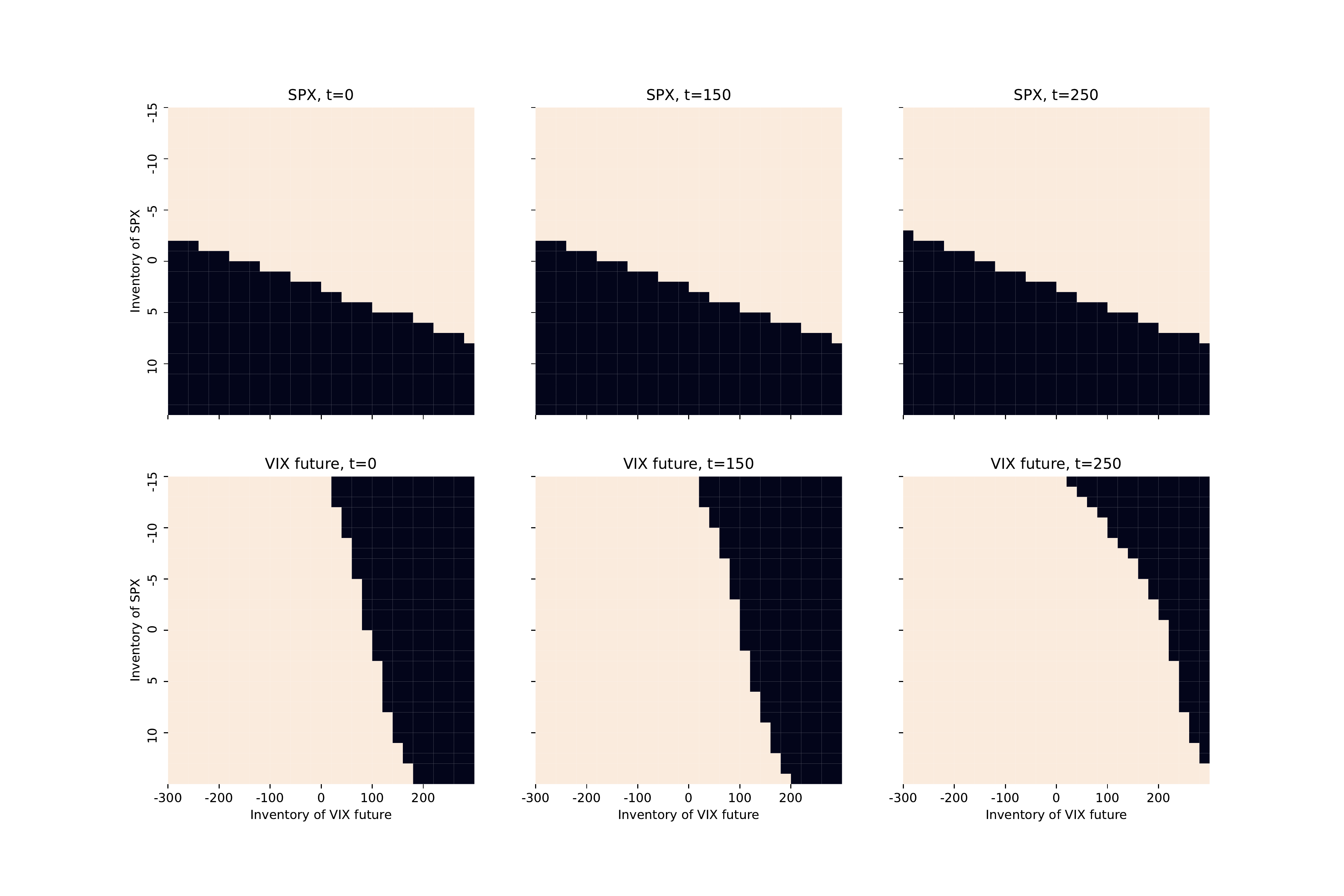}
  \caption{Optimal decisions on whether sending bid order for SPX and VIX future at different $t$ and $q$. 
     The black areas stand for the set $\{q|l^{j,b\ast}_t=0\}$.}
  \label{fig:optim_ctr_two_euler}
\end{figure}

\vskip 0.15in
\noindent
Now we check the relevance of our quadratic approximation. As for the parameter $\alpha^j$ defined in Equation (\ref{eq:hamiltonian_approx}),
we choose $\alpha^j=\frac{1}{4D^j}, j\in\{1,\cdots,d\}$. The asymptotic approximated value function $\hat{v}(q)$ is shown in Figure \ref{fig:value_func_two_approx}. 
Its shape is very close to that of $v(0,q)$, and the resulting greedy decision $\tilde{l}^{j,k\ast}(q)$, 
given in Figure \ref{fig:optim_action_two_approx}, is very similar to $l^{j,k\ast}_0(q)$ shown in Figure \ref{fig:optim_ctr_two_euler}.

\begin{figure}[!h]
  \centering
  \includegraphics[width=\textwidth]{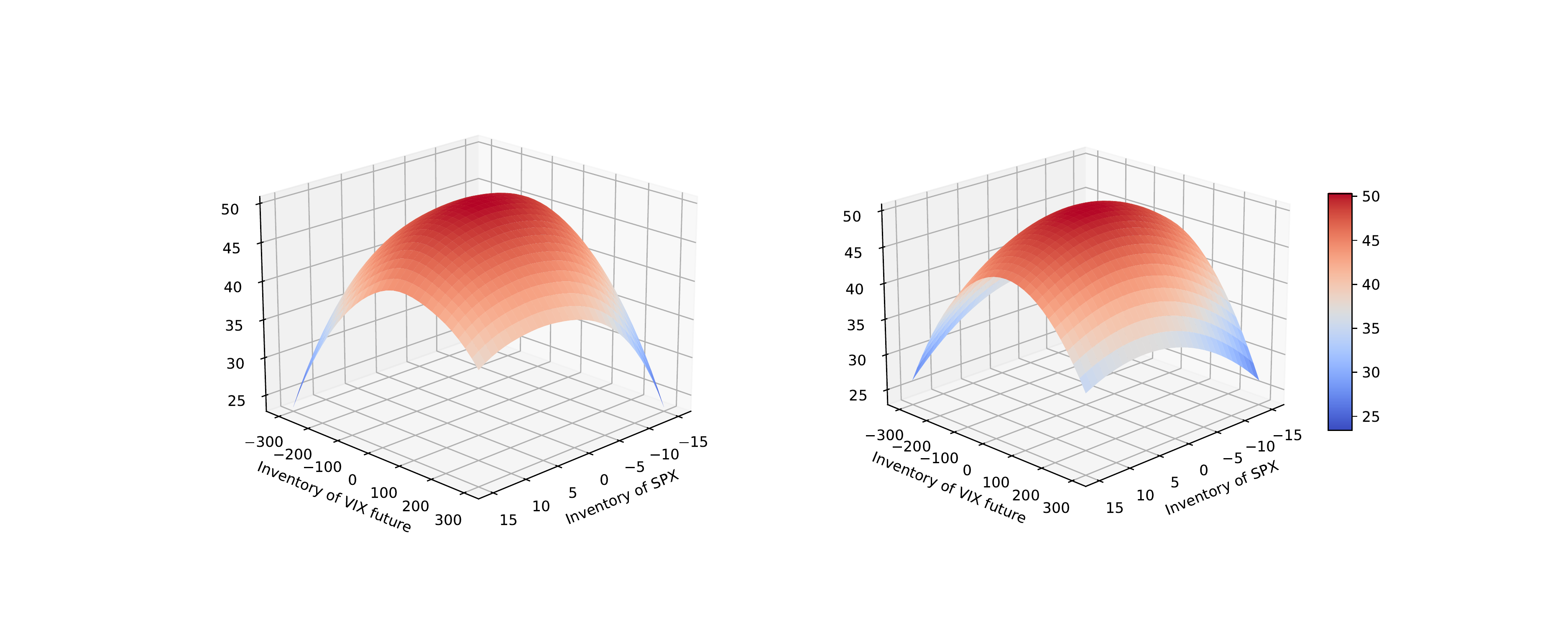}
  \caption{The asymptotic value function $\hat{v}(0, q)$ (right) as $T\rightarrow+\infty$, compared to $v(0, q)$ (left). Note that for ease of comparison 
        $C(0)$ is chosen to have $\max_q\hat{v}(0,q)=\max_qv(0,q)$.}
  \label{fig:value_func_two_approx}
\end{figure}

\begin{figure}[!h]
  \centering
  \includegraphics[width=\textwidth]{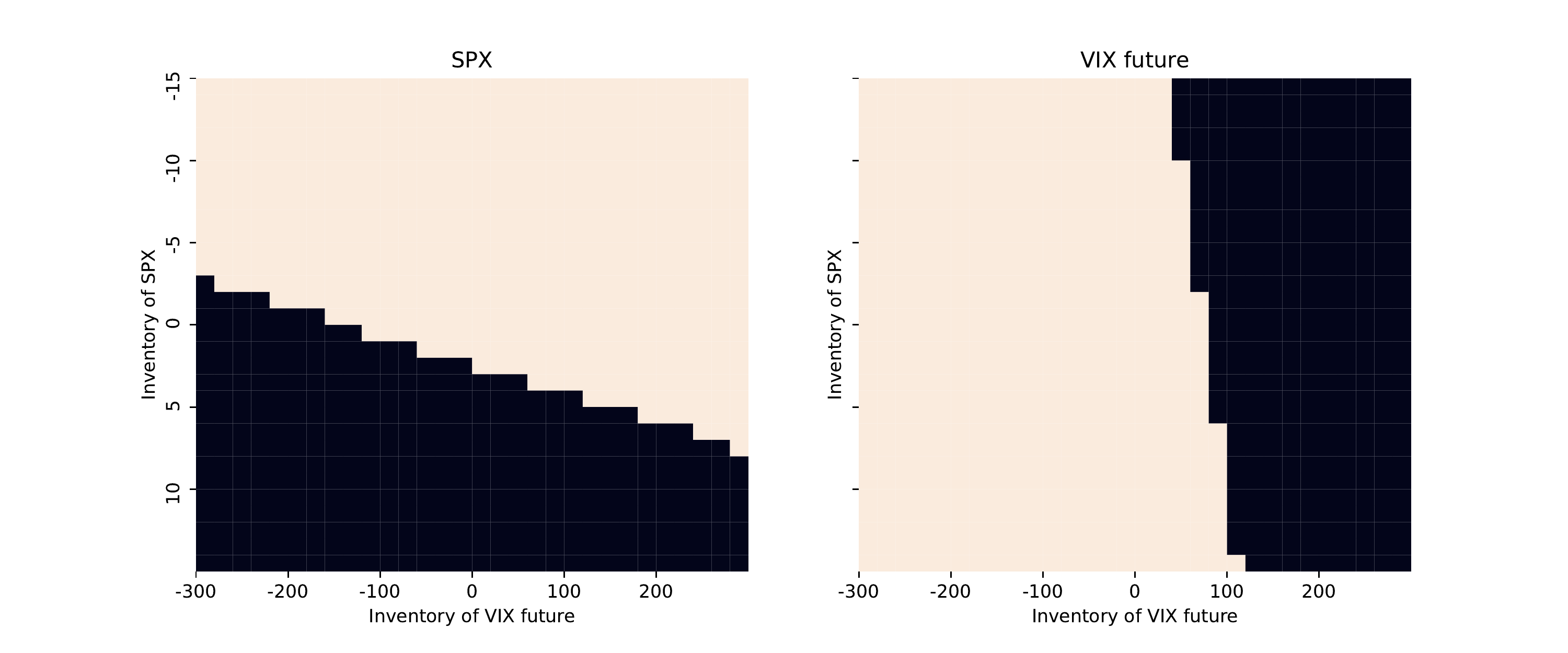}
  \caption{The greedy decisions $\tilde{l}^{j,b\ast}$ obtained with the asymptotic value function $\hat{v}(0,q)$. The black areas stand for the set $\{q|\hat{l}^{j,b\ast}=0\}$.}
  \label{fig:optim_action_two_approx}
\end{figure}

\vskip 0.15in
\noindent
To see the impact of the quadratic approximation in cases with other values of $(\kappa^i)_{j=1,\cdots,d}$ and $\kappa$, and to quantify the deviation of the market making decisions 
following the approximated solution 
from the ones based on the true solution, we conduct backtests on simulated data with different risk aversion preferences\footnote{Our simulations use neural networks to compute efficiently the price of SPX and VIX option, and VIX Future as introduced in \cite{rosenbaum2021deep}.}. 
More precisely, we vary the inventory penalization parameter $\kappa$ from $10^{-4}$ to $1$ and fix $\kappa^1=\kappa^2=\frac{\kappa}{2}$ for simplicity. 
We test two strategies $\beta^{l}$ and $\beta^{\tilde{l}}$, based respectively on $l^{\ast}$ and $\tilde{l}^{\ast}$\footnote{For ease of notation, we ignore the superscript $\ast$ in the following.}.
For each pair $(\kappa, \beta^{l})$ and $(\kappa, \beta^{\tilde{l}})$, we evaluate the mean and standard deviation of the profit of market making, 
defined as $\Pi_{T_b} - \Pi_0$ where $T_b$ stands for the backtesting horizon, on $N$ simulated paths. We use $T_b=\frac{T}{2}=150$ seconds as the backtesting horizon since we are more interested in the results given by the asymptotic of $v$. 
With $N=5000$, Figure \ref{fig:two_backtest} shows the resulting mean-risk profiles for $\beta^{l}$ and $\beta^{\tilde{l}}$.
Even though for a given $\kappa$, slight deviations from $\beta^{l}$ are observed for $\beta^{\tilde{l}}$, the global mean-risk profile of its performance is not significantly 
degraded from that of $\beta^l$. This suggests the effectiveness of the proposed approximation method.   
\begin{figure}[!h]
  \centering
  \includegraphics[width=.8\textwidth]{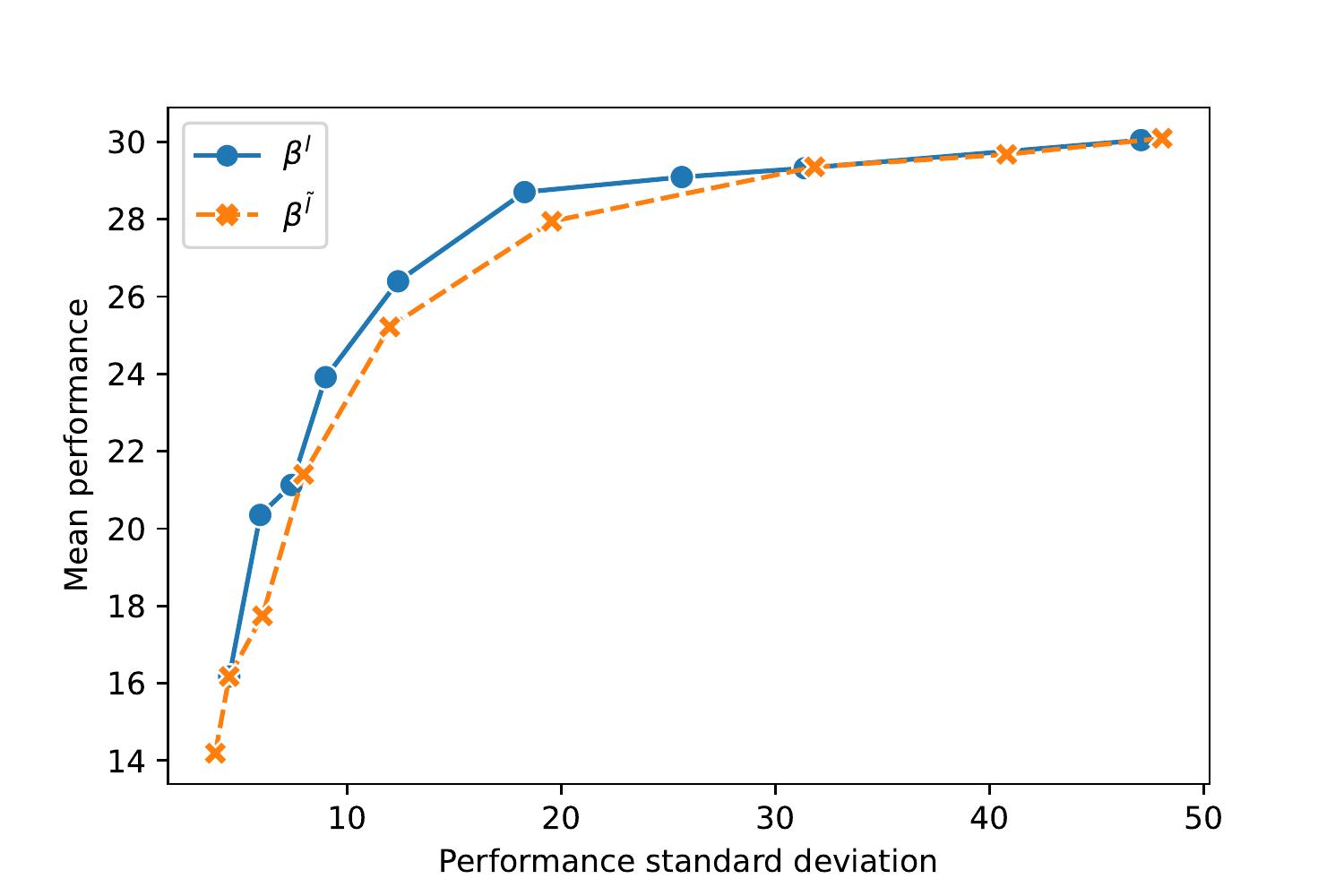}
  \caption{Mean performance and risk of the market making strategies $\beta^l$ and $\beta^{\tilde{l}}$ with varying risk penalization parameter $\kappa$ on two assets.}
  \label{fig:two_backtest}
\end{figure}

\subsubsection{Example 2}
Here we repeat the same experiment as the above for the case where only the portfolio's net inventory risk is controlled, as discussed in Remark (\ref{re:port_risk}). We consider the following assets:
\begin{itemize}
  \item SPX option with maturity $10$ days and strike $2950\$$.
  \item SPX option with maturity $25$ days and strike $3050\$$.
  \item VIX option with maturity $10$ days and strike $22\$$.
  \item VIX option with maturity $25$ days and strike $18\$$.
\end{itemize}
And we have the following parameters:
\begin{itemize}
  \item Tick size: $D^1=0.1\$, D^2=0.1\$, D^3=0.05\$, D^4=0.05\$$.
  \item Inventory risk: $\delta^1=0.533, \delta^2=0.134, \delta^3=-0.014, \delta^4=-0.013$.
  \item Order size: $m^j=2$ for $j\in\{1,2,3,4\}$.
  \item Inventory risk bound: $R=10$.
  \item Execution intensity: $\Lambda^{j,b}=\Lambda^{j,a}=0.05\text{ second}^{-1}$ for $j\in\{1,2,3,4\}$.
  \item Risk penalization: $\kappa=1$.
  \item Time horizon: $T=600$ seconds.
\end{itemize}

\vskip 0.15in
\noindent
With four assets under consideration, the value function $\theta$ defined in Equation (\ref{eq:value_func_theta}) can be solved with classical Euler scheme based on grids. 
As shown in the left subfigure of Figure \ref{fig:multi_value}, the shape of the slices $\theta(\cdot, r)$ is very stable through time. 
We then apply the above quadratic approximation method and obtain its asymptotic form $\hat{\theta}(r)$ as $T\rightarrow+\infty$, which is shown to 
be very close to $\theta(0,r)$ in Figure \ref{fig:multi_value}. 
\begin{figure}[!h]
  \centering
  \begin{subfigure}{.55\textwidth}
    \centering
    \includegraphics[width=\linewidth]{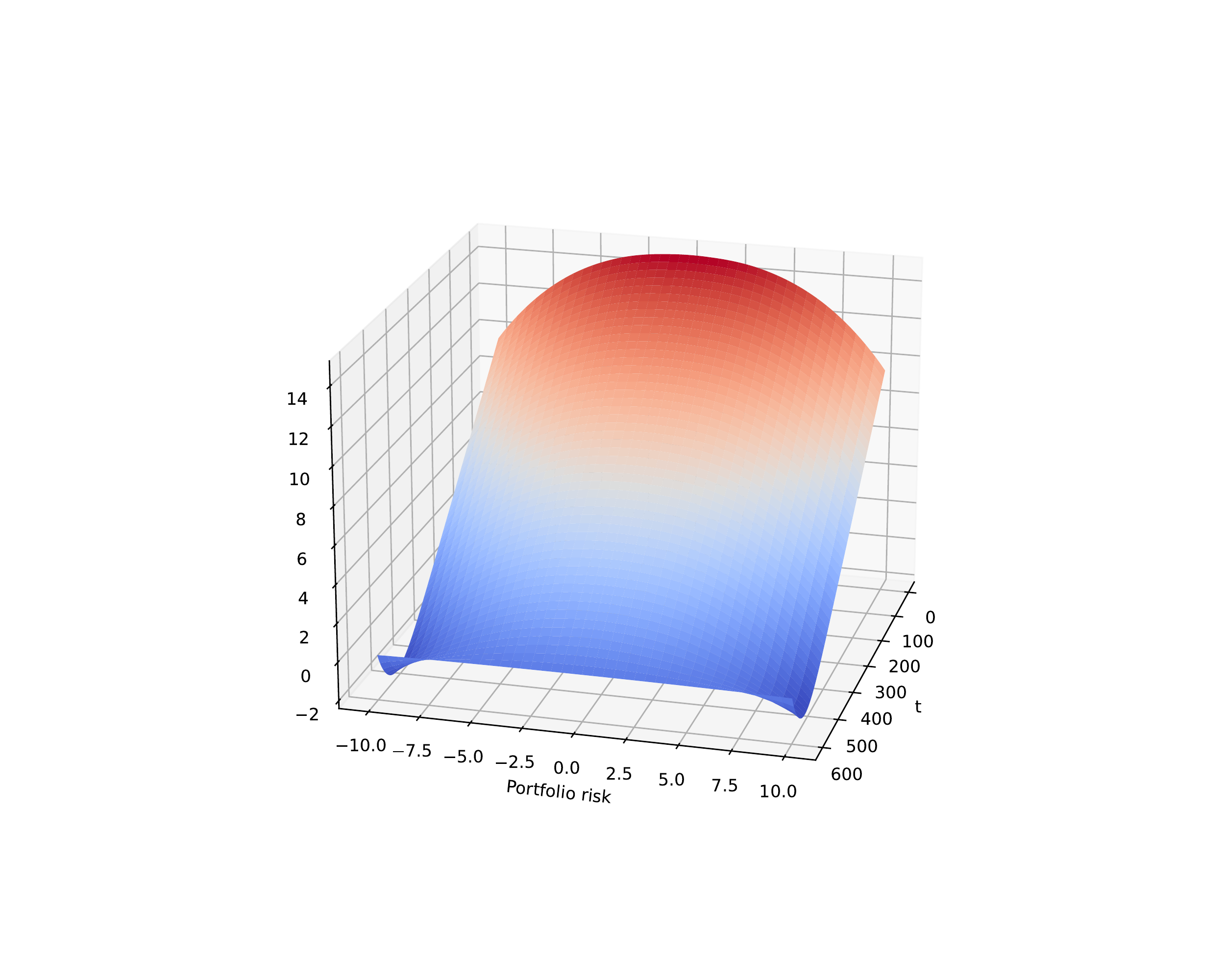}
  \end{subfigure}%
  \begin{subfigure}{.45\textwidth}
    \centering
    \includegraphics[width=\linewidth]{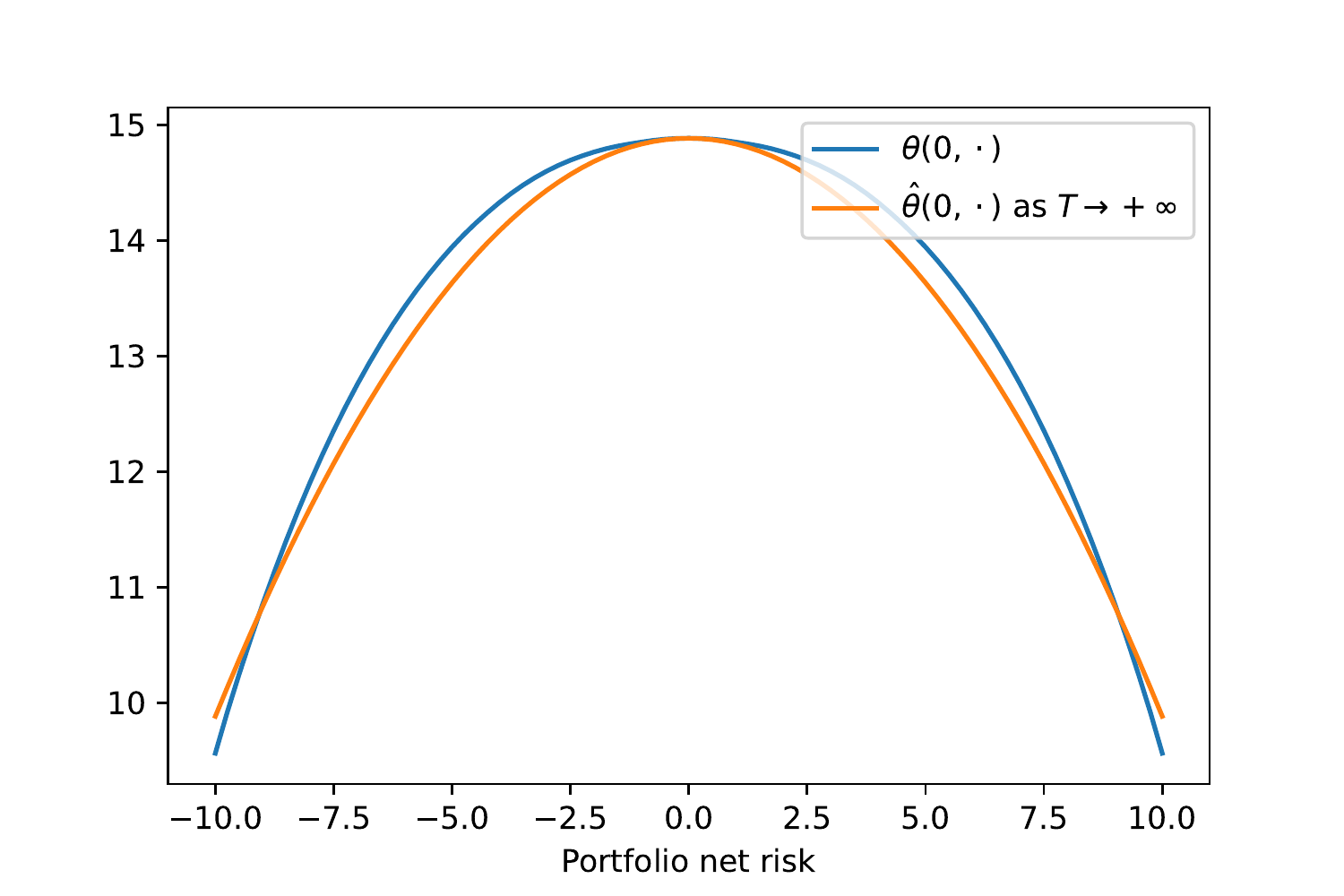}
  \end{subfigure}
  \caption{The value function $\theta(t,r)$ (left), and the asymptotic solution $\hat{\theta}(0,r)$ as $T\rightarrow T$ (right). 
        Note that the constant $C(0)$ in the expression of $\hat{\theta}(0, r)$ is chosen to have $\max_{r}\hat{\theta}(0,r)=\max_{r}\theta(0,r)$.}
  \label{fig:multi_value}
\end{figure}

\vskip 0.15in
\noindent
Similarly to the above example, we use backtests to perceive more directly the effectiveness of the approximated solution. 
We choose $T_b=\frac{T}{2}=300$ seconds and for each pair $(\kappa, \beta^{l})$ and $(\kappa, \beta^{\tilde{l}})$, the mean and standard deviation of the 
performance on $N=5000$ paths are computed. Figure \ref{fig:multi_backtest} shows the results with $\kappa$ varied from $10^{-4}$ to $1$.
We remark again that the mean-risk profile of $\beta^{\tilde{l}}$ is very close to that of $\beta^{l}$.   
\begin{figure}
  \centering
  \includegraphics[width=.8\textwidth]{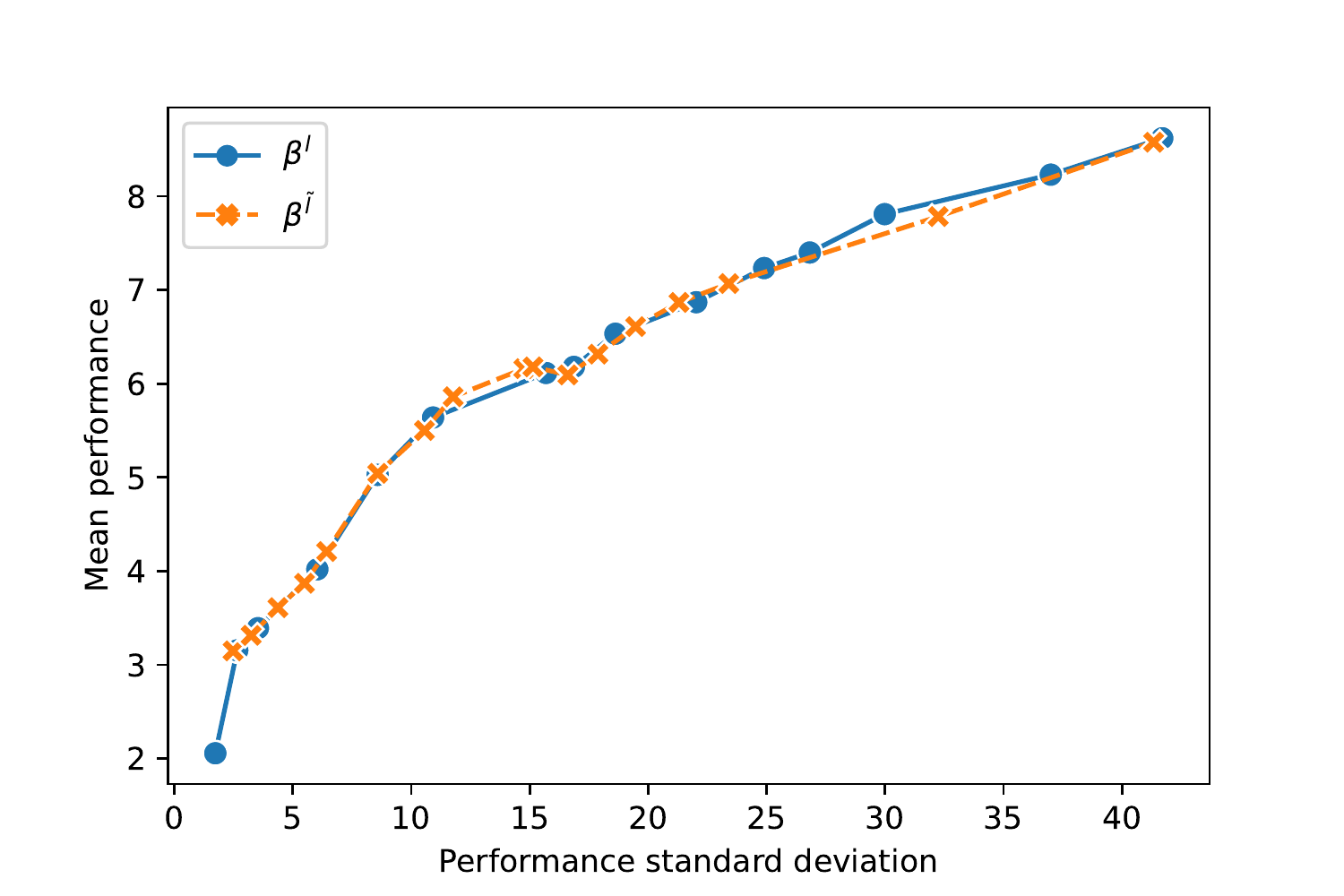}
  \caption{Mean performance and risk of the market making strategies $\beta^{l}$ and $\beta^{\tilde{l}}$ with varying risk penalization parameter $\kappa$ on four assets.}
  \label{fig:multi_backtest}
\end{figure}

\subsection{Example 3}
In this part we consider a market making problem on the six assets mentioned above to exemplify the effect of risk mutualization among them. We test with $T_b=2000$ seconds, $\kappa$ varying from $10^{-4}$ to $10^{-1}$ and $\kappa^j=\frac{\kappa}{2}, j=1,\cdots,6$. As indicated in the introduction, having a closed-form asymptotic solution allows us to recalibrate instantly the solution with the most recently observed parameters. To illustrate the idea, during the test we update $(\delta^j)_{j=1,\dots,d}$ and $\delta$ every 100 seconds according to the running state of the QRH model, \textit{i.e.} $X:=(S,Z^1,\cdots,Z^{10})$. As for the benchmarking case, we set $\kappa=0$ so that the inventory risk of each asset is managed individually. Classical Euler scheme is used for solving the value function of each asset, and then the corresponding optimal decisions can be obtained. Figure \ref{fig:online_backtest} gives the results on 5000 simulated paths. Clearly, the performance of uni-asset market making is suboptimal compared to the other setting, which also considers the net risk at the portfolio level.  

\begin{figure}
  \centering
  \includegraphics[width=.8\textwidth]{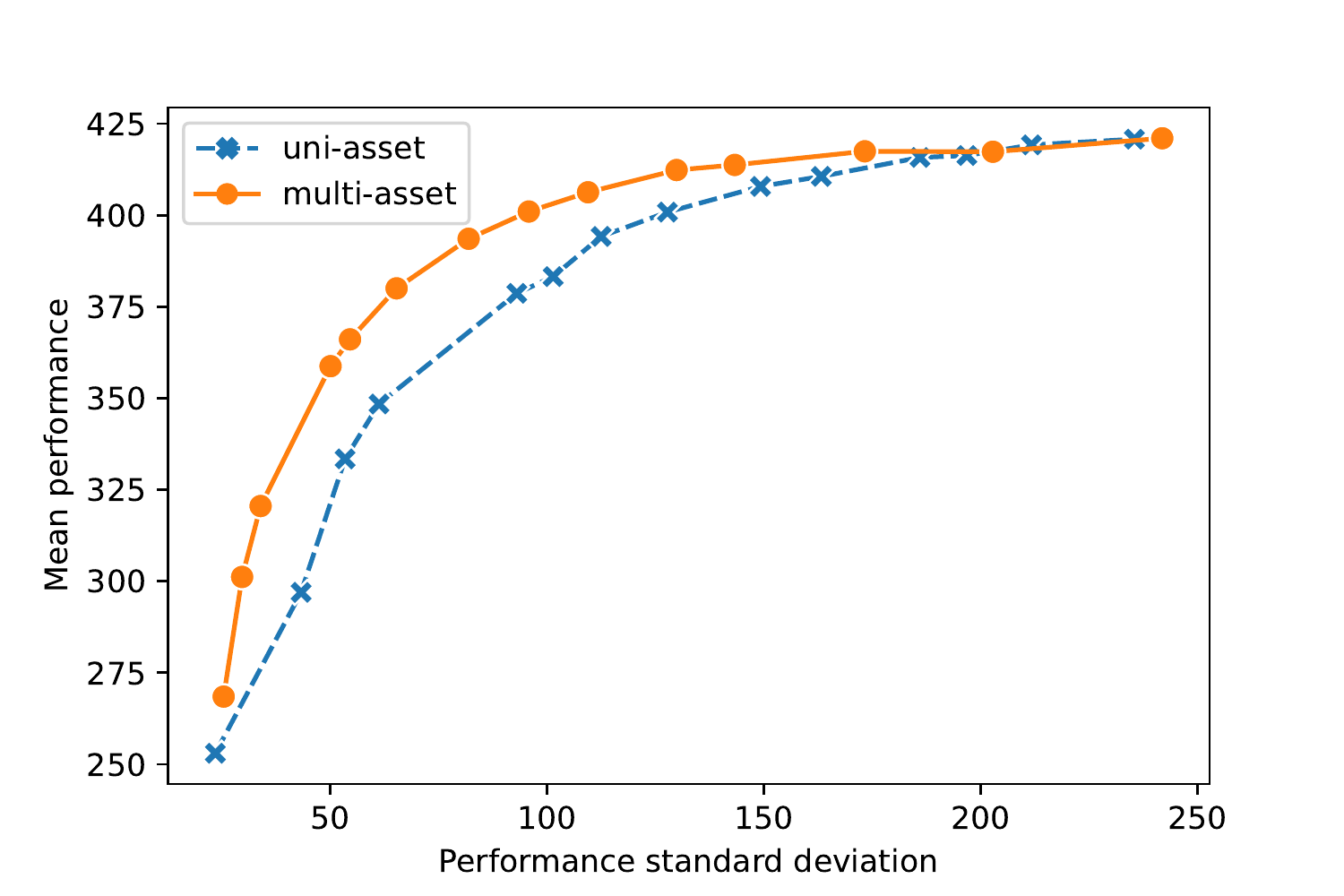}
  \caption{Mean performance and risk of the uni-asset and multi-asset market making strategies with varying risk penalization parameter $\kappa$. For the uni-asset case, optimal market making is conducted on
  each individual asset and the final profits are summed.}
  \label{fig:online_backtest}
\end{figure}

\clearpage
\bibliography{ref}
\bibliographystyle{abbrv}

\end{document}